\def\newblock{\hskip .11em plus .33em minus .07em}
\patchcmd{\maketitle}{\@copyrightspace}{}{}{}
\newcommand{\N}{\mathbb{N}}
\newcommand{\C}{\mathcal{C}}
\newcommand{\dimc}{\operatorname{dim}_C}
\newcommand{\checkconddim}[1][k]{{\sc Check-Condorcet-Dimension-${#1}$}\xspace}
\newcommand{\satcheckconddim}[1][k]{{\sc SAT-Check-Condorcet-Dimension-${#1}$}\xspace}
\newlength{\wordlength}
\title{Finding Preference Profiles of Condorcet Dimension k\\ via SAT}
\author{
	\alignauthor
		Christian Geist \\
		\affaddr{Technische Universit\"at M\"unchen} \\
		\affaddr{Munich, Germany} \\
		\email{geist@in.tum.de}
}
\begin{document}

\maketitle

\begin{abstract}
Condorcet winning sets are a set-valued generalization of the well-known concept of a Condorcet winner. 
As supersets of Condorcet winning sets are always Condorcet winning sets themselves, an interesting property of preference profiles is the size of the smallest Condorcet winning set they admit. 
This smallest size is called the Condorcet dimension of a preference profile. 
Since little is known about profiles that have a certain Condorcet dimension, we show in this paper how the problem of finding a preference profile that has a given Condorcet dimension can be encoded as a satisfiability problem and solved by a SAT solver. 
Initial results include a minimal example of a preference profile of Condorcet dimension $3$, improving previously known examples both in terms of the number of agents as well as alternatives. 
Due to the high complexity of such problems it remains open whether a preference profile of Condorcet dimension $4$ exists. 
\end{abstract}

\section{Introduction} % (fold)
\label{sec:introduction}
The contribution of this paper is twofold. Firstly, we provide a practical implementation for finding a preference profile for a given Condorcet dimension by encoding the problem as a boolean satisfiability (SAT) problem \citep{BHMW09a}, which is then solved by a SAT solver. 
This technique has proven useful for a range of other problems in social choice theory \citep[see, \eg ]{TaLi09a,GeEn11a,BGS14a,BrGe15a} and can easily be adapted. 
For instance, only little needs to be altered in order answer similar questions for dominating sets rather than Condorcet winning sets. 
Secondly, we give an answer to an open question by \citet{ELS11a} and provide a minimal example of a preference profile of Condorcet dimension $3$, which we computed using our implementation. 
This profile involves $6$ alternatives and agents only, improving the size of previous examples both in terms of agents and alternatives.\footnote{For instance, the example in \citet{ELS11a} required $15$ alternatives and agents.} %\footnote{Previously known examples were of sizes $(7,21)$, $(8,13)$, $(11,11)$, $(12,12)$, and $(15,15)$ (agents, alternatives).}
The formalization in SAT turns out to be efficient enough, not only to discover this particular profile of Condorcet dimension $3$, but also to show its minimality.

% section introduction (end)

\vfill\eject
\section{Preliminaries} % (fold)
\label{sec:preliminaries}

Let~$A$ be a set of $m$ alternatives and $N=\{1,\ldots,n\}$ a set of agents. The preferences of agent~$i \in N$ are represented by a linear (\ie reflexive, complete, transitive, and antisymmetric) \emph{preference relation}~$R_i\subseteq A\times A$.
The interpretation of $(a,b) \in R_i$, usually denoted by $a \mathrel{R_i} b$, is  that agent~$i$ values alternative~$a$ at least as much as alternative~$b$. 
A \emph{preference profile} $R = (R_1,\dots, R_n)$ is an $n$-tuple containing a preference relation $R_i$ for each agent $i \in N$.

Let $R$ be a preference profile. 
As introduced by \citet{ELS11a}, we now define the notion of a Condorcet winning set through an underlying covering relation between sets of alternatives and alternatives: 
A set of alternatives $X$ $\theta$-covers an alternative $y$ (short: $X \succ_R^\theta y$) if 
\[
   |\{i\in N \mid \exists x\in X \text{ such that } x \mathrel{R_i} y\}| > \theta n\text{.}
\]

A set of alternatives $X$ is called a \emph{Condorcet winning set} if for each alternative $y\notin X$ the set $X$ $\frac{1}{2}$-covers $y$. 
The set of all Condorcet winning sets of $R$ will be denoted by $\C(R)$. 
The \emph{Condorcet dimension} $\dimc(R)$ is defined as the size of the smallest Condorcet winning set the profile $R$ admits, \ie 
\[
 \dimc(R) := \min\{k\in\N \mid k=|S| \text{ and $S\in\C(R)$}\}\text{.}
\]

\begin{example}
	Consider the preference profile $R$ depicted in \figref{fig:example}. 
	As $R$ does not have a Condorcet winner $\dimc(R)\geq 2$. 
	It can easily be checked that $\{a,b\}$ (like any other two-element set in this case) is a Condorcet winning set of $R$ and, thus, $\dimc(R)=2$.
\end{example}

\begin{figure}[bht] 
	\centering
	$
	\begin{array}{ccc}     
		1&1&1\\\cline{1-3}
	   	a&b&c\\                   
	   	b&c&a\\                   
	   	c&a&b
	\end{array}
	$
	\caption{A preference profile of Condorcet dimension $2$.}
	\label{fig:example}
\end{figure}

In this work, we address the computational problem of finding a preference profile of a given Condorcet dimension. 
To this end, we define the problem of checking whether for a given number of agents $n$ and alternatives $m$ there exists a preference profile $R$ with $\dimc(R)=k$.

\medskip
\noindent
\textbf{Name}: \checkconddim \\
\noindent
\textbf{Instance}: A pair of natural numbers $n$ and $m$.\\
\noindent
\textbf{Question}: Does there exist a preference profile $R$ with $n$ agents and $m$ alternatives that has Condorcet dimension of at least $k$? \\
\smallskip

\begin{table*}[tb]
\small
\centering
% \begin{tabular}{lccccc} 	
% \toprule
% 	Preference profiles	& $m=5$		& $m=6$ 			& $m=7$ 		&	$m=10$ &	$m=15$						\\ 
% \midrule                                                                                                                								
% $n=3$	& $\sim 1.7 \cdot 10^{6}$		& $\sim 3.7 \cdot 10^{8}$	& $\sim 1.3 \cdot 10^{11}$ 	& $\sim 4.8 \cdot 10^{19}$	&	$\sim 2.2 \cdot 10^{36}$	\\ 
% $n=5$	& $\sim 2.5 \cdot 10^{10}$		& $\sim 1.9 \cdot 10^{14}$	& $\sim 3.3 \cdot 10^{18}$ 	& $\sim 6.3\cdot 10^{32}$	 &	$\sim 3.8 \cdot 10^{60}$	\\
% $n=6$	& $\sim 3.0 \cdot 10^{12}$			& $\mathbf{\sim 1.4 \cdot 10^{17}}$	& $\sim 1.6 \cdot 10^{22}$ 	& $\sim 2.3\cdot 10^{39}$	 &	$\sim 5.0 \cdot 10^{72}$	\\
% $n=7$	& $\sim 3.6 \cdot 10^{14}$			& $\sim 1.0 \cdot 10^{20}$	& $\sim 8.3 \cdot 10^{25}$ 	& $\sim 8.3\cdot 10^{45}$	 &	$\sim 6.5 \cdot 10^{84}$	\\
% $n=10$	& $\sim 6.2 \cdot 10^{20}$			& $\sim 3.7 \cdot 10^{28}$	& $\sim 1.1 \cdot 10^{37}$ 	& $\sim 4.0\cdot 10^{65}$	 &	$\sim 1.5 \cdot 10^{121}$	\\
% \midrule                                                                                                                								
% Subsets of $A$ of size $2$	& 10	& 15 				& 21			& 45	 &	105		\\
% \bottomrule
% \end{tabular}

\begin{tabular}{lcccccc} 
\toprule
	Preference profiles	& $n=3$		& $n=5$ 			& $n=6$ 		&	$n=7$ &	$n=10$	&	$n=15$						\\ 
\midrule                                                                                                                								
$m=5$	& $\sim 1.7 \cdot 10^{6}$		&  $\sim 2.5 \cdot 10^{10}$	& $\sim 3.0 \cdot 10^{12}$ 	& $\sim 3.6 \cdot 10^{14}$	&	$\sim 6.2 \cdot 10^{20}$	&	$\sim 1.5 \cdot 10^{31}$	\\ 
$m=6$	& $\sim 3.7 \cdot 10^{8}$		& $\sim 1.9 \cdot 10^{14}$	& $\mathbf{\sim 1.4 \cdot 10^{17}}$ 	& $\sim 1.0 \cdot 10^{20}$	 &	$\sim 3.7 \cdot 10^{28}$	&	$\sim 7.2 \cdot 10^{42}$	\\
$m=7$	& $\sim 1.3 \cdot 10^{11}$			& $\sim 3.3 \cdot 10^{18}$	& $\sim 1.6 \cdot 10^{22}$ 	& $\sim 8.3 \cdot 10^{25}$	 &	$\sim 1.1 \cdot 10^{37}$	&	$\sim 3.4 \cdot 10^{55}$	\\
$m=10$	& $\sim 4.8 \cdot 10^{19}$			& $\sim 6.3\cdot 10^{32}$	& $\sim 2.3\cdot 10^{39}$ & $\sim 8.3\cdot 10^{45}$ 	& $\sim 4.0\cdot 10^{65}$	&	$\sim 2.5 \cdot 10^{98}$		\\
% $m=15$	& $\sim 2.2 \cdot 10^{36}$			& $\sim 3.8 \cdot 10^{60}$	& $\sim 5.0 \cdot 10^{72}$ & $\sim 6.5 \cdot 10^{84}$	& $\sim 1.5 \cdot 10^{121}$	&	$\sim 5.6 \cdot 10^{181}$		\\
\bottomrule

\end{tabular}
\caption{Number of objects involved in the \checkconddim[3] problem. For $k=3$ the subsets of size $2$ are the candidates for Condorcet winning sets.}
\label{tab:magnitudes}
\end{table*}

Note that the following simple observation can be used to prune the search space in terms of the number of alternatives. 

\begin{observation}
	\label{obs:indAlternatives}
	If there is a preference profile $R$ of Condorcet dimension $\dimc(R)$ involving $m$ alternatives, then there is also one of the same dimension involving $m+1$ alternatives.
\end{observation}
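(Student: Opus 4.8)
The plan is to take the given profile $R$ on an alternative set $A$ with $|A|=m$ and construct a profile $R'$ on $A'=A\cup\{z\}$, where $z$ is a fresh alternative, by appending $z$ to the bottom of every agent's ranking while leaving the relative order of the alternatives in $A$ untouched. I will then establish $\dimc(R')=\dimc(R)$ by proving the two inequalities separately. Throughout I use that $\dimc$ is at least $1$ (the empty set covers no alternative), so every smallest Condorcet winning set is nonempty, and that $R'_i$ restricted to $A\times A$ equals $R_i$ for each agent $i$.

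For $\dimc(R')\le\dimc(R)$, let $X$ be a smallest Condorcet winning set of $R$, so that $|X|=\dimc(R)$, and I claim $X$ remains a Condorcet winning set of $R'$. For any $y\in A\setminus X$ the covering condition is unchanged, since the agents' rankings agree on $A$; hence $X\succ_{R'}^{1/2}y$ holds because $X\succ_R^{1/2}y$ does. For the new alternative $y=z$, every $x\in X$ satisfies $x\mathrel{R'_i}z$ for all $i$ (as $z$ is ranked last), so all $n$ agents witness the cover and $X\succ_{R'}^{1/2}z$; here I use $X\neq\emptyset$. Thus $X\in\C(R')$ and $\dimc(R')\le|X|=\dimc(R)$.

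For the reverse inequality, I take any Condorcet winning set $X'$ of $R'$ and show that $X:=X'\setminus\{z\}$ is a Condorcet winning set of $R$ with $|X|\le|X'|$; applying this to a smallest $X'$ yields $\dimc(R)\le\dimc(R')$. Fix $y\in A\setminus X$; since $y\neq z$ we also have $y\notin X'$, so $X'\succ_{R'}^{1/2}y$. The crucial observation is that $z$ never helps cover an original alternative: because $z$ sits at the bottom of every ranking, $z\mathrel{R'_i}y$ fails for all $i$ whenever $y\in A$. Consequently the set of agents witnessing the cover of $y$ by $X'$ coincides with the set witnessing it by $X$, and restricting to $A$ gives $X\succ_R^{1/2}y$, as required.

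I expect the only delicate point to be the bookkeeping in this second direction, specifically the case $z\in X'$: one must verify that discarding $z$ leaves the covering of every member of $A$ intact, which is exactly what the bottom-placement of $z$ guarantees. The remaining verifications are routine once one notes that $R$ and $R'$ induce the same covering relation on the original alternatives.
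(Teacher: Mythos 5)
Your proof is correct and follows essentially the same route as the paper: append the fresh alternative at the bottom of every agent's ranking and establish the two inequalities $\dimc(R')\le\dimc(R)$ and $\dimc(R)\le\dimc(R')$ separately. If anything, you are more careful than the paper in the reverse direction: the paper simply asserts that a small Condorcet winning set $S'$ of $R'$ is ``by construction'' also one of $R$, silently passing over the case where the new alternative belongs to $S'$, whereas your explicit removal of $z$ together with the observation that $z$ never helps cover any original alternative closes exactly that gap.
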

\begin{proof}
	Let $R$ be a preference profile on a set of $m$ alternatives $A$ with $\dimc(R)$. 
	We need to construct a preference profile $R'$ on a set of $m+1$ alternatives $A'=A\cup\{a'\}$ with $a'\notin A$ such that $\dimc(R')=\dimc(R)$. 
	For each $i$, define $R'_i:=R_i\cup \{(x,a')\mid x\in A\}$, \ie add $a'$ in the last place of agent $i$'s preference ordering. 
	It is then immediately clear that $\C(R)\subseteq \C(R')$, which establishes $\dimc(R)\geq \dimc(R')$. 
	On the other hand, if we assume $\dimc(R)> \dimc(R')$, then there exist a Condorcet winning set $S'$ for $R'$ of size $k:=|S'|<\dimc(R)$. 
	This set, however, must--by the construction of $R'$--also be a Condorcet winning set for $R$; a contradiction.
\end{proof}

% section preliminaries (end)

\section{Methodology} % (fold)
\label{sec:methodology}

The number of objects potentially involved in the \checkconddim problem are given in \tabref{tab:magnitudes} for $k=3$. 
It is immediately clear that a na\"ive algorithm will not solve the problem in a satisfactory manner.
This section describes our algorithmic efforts to solve this problem for reasonably large instances.

\subsection{Translation to propositional logic (SAT)}
\label{sec:sat}
In order to solve the problem \checkconddim[k] for arbitrary $k\in\N$, we follow a similar approach as \citet{BGS14a}: 
we translate the problem to propositional logic (on a computer) and use state-of-the-art SAT solvers to find a solution. 
At a glance, the overall solving steps are shown in~\algref{alg:SATcheck}.

\begin{algorithm}[t]
	\textbf{Input:} positive integers $n$ and $m$\\
	\textbf{Output:} whether there exists a preference profile $R$ with $n$ agents and $m$ alternatives and $\dimc(R)\geq k$\\
		\tcc{Encoding of problem in CNF}
		File cnfFile\;
		\ForEach{agent $i$}{
			cnfFile += Encoder.reflexivePreferences($i$)\;
			cnfFile += Encoder.completePreferences($i$)\;
			cnfFile += Encoder.transitivePreferences($i$)\;
			cnfFile += Encoder.antisymmetricPreferences($i$)\;
		}
		\ForEach{set $S\subseteq A$ with $|S|=k-1$}{
			cnfFile += Encoder.noCondorcetWinningSet($S$)\;
		}
		\tcc{Symmetry breaking}
		cnfFile += Encoder.neutrality()\;
		\tcc{SAT solving}
		satisfiable = SATsolver.solve(cnfFile)\;
		\If{instance is satisfiable}{
			\Return true\;
		} \Else {
			\Return false
		}
		\caption{\satcheckconddim}
		\label{alg:SATcheck}
\end{algorithm}

Generally speaking, the problem at hand can be understood as the problem of finding a preference profile that satisfies certain conditions---here: having a Condorcet dimension of at least $k$). 
Thus, a satisfying instance of the propositional formula to be designed should represent a preference profile. 
To capture this, a formalization based on two types of variables suffices. 
The boolean variable $r_{i,a,b}$ represents $a\mathrel{R_i} b$, \ie agent $i$ ranking alternative $a$ at least as high as alternative $b$; and the variable $c_{S,y}$ stands for the set $S$ covering alternative $y$. 

In more detail, the following conditions/axioms need to be formalized:\footnote{The further axiom for neutrality is not required for correctness, but speeds up the solving process. It is discussed in \secref{subsec:opt}.} 
\begin{enumerate}
	\item All $n$ agents have linear orders over the $m$ alternatives as their preferences (short: linear preferences)
	\item For each set $S\subseteq A$ with $|S|=k-1$, it is not the case that $S$ is a Condorcet winning set (short: no Condorcet set)
\end{enumerate}

For the first axiom, we encode reflexivity, completeness, transitivity, and anti-symmetry of the relation $R_i$ for all agents $i$. 
The complete translation to CNF (conjunctive normal form, the established standard input format for SAT solvers) is given exemplarily for the case of transitivity; the other axioms are converted analogously.

In formal terms transitivity can be written as 
\begin{eqnarray*}
	&& (\forall i)(\forall x,y,z)\left(x\mathrel{R_i}y \wedge y\mathrel{R_i}z \rightarrow x\mathrel{R_i}z \right) \\
	&\equiv & (\forall i)(\forall x,y,z)\left(r_{i,x,y} \wedge r_{i,y,z} \rightarrow r_{i,x,z} \right) \\
	&\equiv & \bigwedge_i\bigwedge_{x,y,z}\left(\neg \left(r_{i,x,y} \wedge r_{i,y,z}\right) \vee r_{i,x,z} \right) \\
	&\equiv & \bigwedge_i\bigwedge_{x,y,z}\left(\neg r_{i,x,y} \vee \neg r_{i,y,z} \vee r_{i,x,z} \right)\text{,}
\end{eqnarray*}
which then translates to the pseudo code in \algref{alg:transitivity} for generating the CNF file.
The key in the translation of the inherently higher order axioms to propositional logic is (as pointed out by \citet{GeEn11a} already) that because of finite domains, all quantifiers can be replaced by finite conjunctions or disjunctions, respectively.

\begin{algorithm}
	\caption{Encoding of transitivity of individual preferences}
	\label{alg:transitivity}
	\ForEach{agent $i$}{
		\ForEach{alternative $x$}{
			\ForEach{alternative $y$}{
				\ForEach{alternative $z$}{
					variable\_not(r$(i,x,y)$)\;
					variable\_not(r$(i,y,z)$)\;
					variable(r$(i,x,z)$)\;
					newClause\;
				}
			}
		}
	}
\end{algorithm}

In all algorithms, a subroutine r$(i,x,y)$ takes care of the compact enumeration of variables.\footnote{The DIMACS CNF format only allows for integer names of variables. But since we know in advance how many agents and alternatives there are, we can simply use a standard enumeration method for tuples of objects.} 

The axiom ``no Condorcet set'' can be formalized in a similar fashion, but requires further subroutines to avoid an exponential blow-up of the size of the formula in CNF. 
In short, the axiom can be written as
\begin{eqnarray*}
	&& (\forall S\subseteq A) \left( |S|=k-1 \rightarrow S\notin \C(R) \right)\\
	&\equiv & (\forall S\subseteq A) \left( |S|=k-1 \rightarrow (\exists y\notin X) S \nsucc_R^\theta y\right) \\
	&\equiv & \bigwedge_{\substack{S\subseteq A\\|S|=k-1}} \bigvee_{y\notin X} \neg c_{S,y} \text{.}
\end{eqnarray*}

It remains as part of this axiom to define a sufficient condition for $S \succ_R^\theta y$. 
In the following, we denote the smallest number of agents required for a strict $\theta$-majority by $m(n):=\lfloor \theta k \rfloor + 1$. 
In formal terms, we write for each set $S\subseteq A$ with $|S|=k-1$ and each alternative $y\notin X$: 

\begin{eqnarray*}
	&& S \succ_R^\theta y \leftarrow \left((\exists M\subseteq N) |M|=m(n) \wedge \right.\\
	&& \qquad\qquad\qquad \left.(\forall i\in M)(\exists x\in S) x \mathrel{R_i} y \right) \\
	&\equiv & S \succ_R^\theta y \vee \left((\forall M\subseteq N) |M|=m(n) \rightarrow \right.\\
	&& \qquad\qquad\qquad \left.(\exists i\in M)(\forall x\in S) \neg x \mathrel{R_i} y \right) \\
	&\equiv & c_{S,y} \vee \left(\bigwedge_{\substack{M\subseteq N\\|M|=m(n)}} \bigvee_{i\in M} \bigwedge_{x\in S} \neg r_{i,x,y} \right) \text{.}
\end{eqnarray*}

In order to avoid an exponential blow-up when converting this formula to CNF, variable replacement (a standard procedure also known as Tseitin transformation) is applied. In our case, we replaced $\bigwedge_{x\in S} \neg r_{i,x,y}$ by new variables of the form $h_{S,y,i}$ and introduced the following defining clauses:\footnote{Note that one direction of the standard bi-implication suffices here.} 
\begin{eqnarray*}
	&& \bigwedge_{\substack{S\subseteq A\\|S|=k-1}} \bigwedge_{y\in A} \bigwedge_{i\in N} \left(h_{S,y,i} \rightarrow \bigwedge_{x\in S} \neg r_{i,x,y} \right) \\
	&\equiv & \bigwedge_{\substack{S\subseteq A\\|S|=k-1}} \bigwedge_{y\in A} \bigwedge_{i\in N} \left(\neg h_{S,y,i} \vee \bigwedge_{x\in S} \neg r_{i,x,y} \right) \\
	&\equiv & \bigwedge_{\substack{S\subseteq A\\|S|=k-1}} \bigwedge_{y\in A} \bigwedge_{i\in N} \bigwedge_{x\in S} \left(\neg h_{S,y,i} \vee \neg r_{i,x,y} \right) \text{.}
\end{eqnarray*}
In this case, the helper variables even have an intuitive meaning as $h_{S,y,i}$ enforces that for no alternative $x\in S$ it is the case that agent $i$ prefers alternative $y$ over alternative $x$, \ie agent $i$ does not contribute to $S$ $\theta$-covering $y$.

Note that the conditions like $|S|=k-1$ can easily be fulfilled during generation of the corresponding CNF formula on a computer. 
For enumerating all subsets of alternatives of a given size we, for instance, used Gosper's Hack \citep{Knut11a}. 

The corresponding pseudo code for the ``no Condorcet set'' axiom can be found in \algref{alg:noCondSet}.

\begin{algorithm}
	\caption{Encoding of the axiom ``no Condorcet set''}
	\label{alg:noCondSet}
		\ForEach{set $S\subseteq A$ with $|S|=k-1$}{
			\ForEach{alternative $y\notin S$}{
				variable\_not(c$(S,y)$)\;
			}
			newClause\;
			\tcc{Definition of variable $c_{S,y}$}
			\ForEach{set $M\subseteq N$ with $|M|=m(n)$}{
				variable(c$(S,y)$)\;
				\ForEach{agent $i\in M$}{
					variable(h$(S,y,i)$)\;
				}
				newClause\;
			}
			\tcc{Definition of auxiliary variable $h_{S,y,i}$}
			\ForEach{agent $i\in N$}{
				\ForEach{$x\in S$}{
					variable\_not(r$(i,x,y)$)\;
					variable\_not(h$(S,y,i)$)\;
					newClause\;
				}
			}
		}
\end{algorithm}

With all axioms formalized in propositional logic, we are now ready to search for preference profiles $R$ of Condorcet dimension $\dimc(R)\geq k$.
Before we do so, however, we describe a (standard) optimization technique called symmetry breaking, which speeds up the solving process of the SAT solver. 

\subsection{Optimized computation}
\label{subsec:opt}

Observe that from a given example of a preference profile $R$ with $\dimc(R)\geq k$ we can always generate further examples simply by permuting the (names of the) alternatives. 
One could say that all positive witnesses to the \satcheckconddim problem are invariant under permutations of the alternatives. 
Therefore, we implemented a standard technique in SAT solving called symmetry breaking; here in the form of setting agent 1's preferences to a fixed preference ordering, for instance to lexicographic preferences.
This trims the search space for the SAT solver and therefore reduces the runtime of the solving process. 
An encoding can be achieved simply by adding a subformula of the form 
\[
\bigwedge_{x<y} r(n_1, x, y)\text{,}
\]
which sets the first agents preferences to lexicographic ordering. 

% section methodology (end)

\bigskip
\section{Initial Results} % (fold)
\label{sec:results}

All computations were run on a Intel Core i5, 2.66GHz (quad-core) machine with 12 GB RAM using the SAT solver \textsc{plingeling} \citep{Bier13a}.

When called with the parameters $n=m=6$, our implementation of \satcheckconddim returns the preference profile $R^{\text{dim3}}$ within about one second. 
$R^{\text{dim3}}$ is a smallest preference profile of Condorcet dimension $3$ and is shown in \figref{fig:dim3}.\footnote{The witnesses for all sets $S\subseteq A$ with $|S|=2$ not being Condorcet winning sets are also returned by \satcheckconddim[3] and can be obtained from the output in \figref{fig:dim3out}. That there is a larger set (\eg $\{a,b,c\}$) which forms a Condorcet winning set can easily be confirmed manually (or by calling \satcheckconddim[4]).} 

Furthermore, it turns out that this preference profile is a smallest profile of Condorcet dimension $3$. 
All strictly smaller profiles (\ie with less agents and at most as many alternatives, or with less alternatives and at most as many agents) can be shown to have a Condorcet dimension of at most 2 via \satcheckconddim[3].\footnote{The running time to check all cases again is only a few seconds.} 

\begin{figure}[tb] 
	\centering
	$
	\begin{array}{cccccc}     
		1&1&1&1&1&1\\\cline{1-6}
	   	a&b&c&d&e&f\\                   
	   	b&c&d&f&a&e\\                   
	   	c&e&f&b&d&a\\ 
		d&d&a&e&b&b\\ 
		e&f&e&a&c&c\\ 
		f&a&b&c&f&d
	\end{array}
	$
	\caption{A smallest preference profile of Condorcet dimension $3$ (with $n=6$ agents $m=6$ alternatives).}
	\label{fig:dim3}
\end{figure}

An overview of further (preliminary) results can be found in \tabref{tab:complete}.

\begin{figure}[h] 
\begin{Verbatim}[frame=single,fontsize=\relsize{-0.5}]
Model (decoding of satisfying assignment) found:
Agent 0: 0 > 1 > 2 > 3 > 4 > 5
Agent 1: 2 > 3 > 5 > 0 > 4 > 1
Agent 2: 5 > 4 > 0 > 1 > 2 > 3
Agent 3: 3 > 5 > 1 > 4 > 0 > 2
Agent 4: 4 > 0 > 3 > 1 > 2 > 5
Agent 5: 1 > 2 > 4 > 3 > 5 > 0
does not have a Condorcet winning set of size 2 
(6 agents and 6 alternatives).
Witnesses:
{0, 1} does not cover alternative(s): 5
{0, 2} does not cover alternative(s): 4
{1, 2} does not cover alternative(s): 0
{0, 3} does not cover alternative(s): 4
{1, 3} does not cover alternative(s): 0
{2, 3} does not cover alternative(s): 0
{0, 4} does not cover alternative(s): 5
{1, 4} does not cover alternative(s): 5
{2, 4} does not cover alternative(s): 1
{3, 4} does not cover alternative(s): 2
{0, 5} does not cover alternative(s): 3
{1, 5} does not cover alternative(s): 3
{2, 5} does not cover alternative(s): 1
{3, 5} does not cover alternative(s): 2
{4, 5} does not cover alternative(s): 3
\end{Verbatim}
\caption{Output of \satcheckconddim[3] for $n=6$ agents and $m=6$ alternatives.}
\label{fig:dim3out}
\end{figure}

\begin{table}[tbh]
	\centering
\begin{tabular}{*{13}{c}}
	\toprule
	$m \backslash n$ 	&	1	&	2	&	3	&	4	&	5	&	6	&	7	&	8	&	9	&	10	&	11	&	12	\\
	\midrule
	1					&	--	&	--	&	--	&	--	&	--	&	--	&	--	&	--	&	--	&	--	&	--	&	--	\\
	2					&	--	&	--	&	--	&	--	&	--	&	--	&	--	&	--	&	--	&	--	&	--	&	--	\\
	3					&	--	&	--	&	--	&	--	&	--	&	--	&	--	&	--	&	--	&	--	&	--	&	--	\\
	4					&	--	&	--	&	--	&	--	&	--	&	--	&	--	&	--	&	--	&	--	&	--	&	--	\\
	5					&	--	&	--	&	--	&	--	&	--	&	--	&	--	&	--	&	--	&	--	&	--	&	--	\\
	6					&	--	&	--	&	--	&	--	&	--	&	+	&	--	&		&	--	&	--	&	--	&	+	\\
	7					&	--	&	--	&	--	&	--	&	--	&	+	&	--	&		&	--	&	+	&		&	+	\\
	8					&	--	&	--	&	--	&		&	--	&	+	&		&		&		&	+	&		&	+	\\
	9					&	--	&	--	&	--	&		&		&	+	&		&		&		&	+	&		&	+	\\
	10					&	--	&	--	&	--	&		&		&	+	&		&		&		&	+	&		&	+	\\
	% 11					&	--	&	--	&		&		&		&	+	&		&		&		&	+	&		&	+	\\
	% 12					&	--	&	--	&		&		&		&	+	&		&		&		&	+	&		&	+	\\
	% 13					&	--	&	--	&		&		&		&	+	&		&		&		&	+	&		&	+	\\
	% 14					&	--	&	--	&		&		&		&	+	&		&		&		&	+	&		&	+	\\
	% 15					&	--	&	--	&		&		&		&	+	&		&		&		&	+	&		&	+	\\
	\bottomrule
\end{tabular}
\caption{Preliminary collection of results obtained with \satcheckconddim[3] for different numbers of alternatives $m$ and voters $n$. A plus (+) stands for a preference profile found; a minus (--) for the fact that all preference profiles have a Condorcet winning set of size $2$.}
\label{tab:complete}
\end{table}

% section results (end)

\section{Outlook and future work} % (fold)
\label{sec:outlook}
Our implementation might be useful to find preference profiles of Condorcet dimension $4$, a problem that has been raised by \citet{ELS11a}. 
Even though with the current formalization the solving process did not terminate within a reasonable amount of time, we intend to further pursue this direction in future work. 
Adding further symmetry breaking clauses (which make use of anonymity in addition to neutrality) could be a first step in this direction. 

Furthermore, one could extend the notion of Condorcet dimension to other individual preferences, \eg with agents having weak (i.e., ties are allowed) or even incomplete preferences. 
Because of the high flexibility of our SAT formalization, one can easily apply the same method to analyze these related concepts and questions.\footnote{For the two suggested variants, deleting axioms from the formalization suffices.} 

A formalization with other solving techniques, \eg ASP \citep{GKKS12a}, might be another way to achieve the desired performance. 

% section outlook (end)

\section*{Acknowledgments}
This material is based upon work supported by Deutsche Forschungsgemeinschaft under grant {BR~2312/9-1}. The author thanks Felix Brandt and Hans Georg Seedig for helpful discussions and their support.

% \bibliography{../pamas/abb,../pamas/group}

\end{document}